\def\endfigure{\end@float}
\def\endtable{\end@float}
\let\ifacconfcaptionwidth\captionwidth
\let\captionwidth\ifacconfcaptionwidth
\newtheorem{definition}{Definition}[section]
\newtheorem{theorem}{Theorem}
\newtheorem{lemma}{Lemma}
\newtheorem{corollary}{Corollary}
\begin{document}
\begin{frontmatter}

\title{Decentralized Vehicle Coordination and Lane Switching without Switching of Controllers\thanksref{footnoteinfo}} 

\thanks[footnoteinfo]{This work was supported by the ERC Consolidator Grant
LEAFHOUND, the Swedish Research Council (VR) and the Knut
och Alice Wallenberg Foundation (KAW).}

\author{Arno Frauenfelder},
\author{Adrian Wiltz} and
\author{Dimos V. Dimarogonas}

\address{The authors are with the Division of Decision and Control Systems,
KTH Royal Institute of Technology, SE-100 44 Stockholm, Sweden $\{$arnof, wiltz,
dimos$\}$@kth.se.}

\begin{abstract}                
This paper proposes a controller for safe lane change manoeuvres of autonomous vehicles using 
high-order control barrier and Lyapunov functions. The inputs are
calculated using a quadratic program (CLF-CBF-QP) which admits short calculation times. The controller allows for adaptive cruise control, lane following, lane switching and ensures collision
avoidance at all times. The novelty of the controller is the decentralized approach to the coordination of vehicles without switching of controllers. In particular, vehicles indicate their manoeuvres which influences their own safe region and that of neighboring vehicles. This is achieved by introducing so-called coordination functions in the design of control barrier functions. In a relevant simulation example, the controller is validated and its effectiveness is demonstrated.
\end{abstract}

\begin{keyword}
Multi-Vehicle Systems, Autonomous vehicles, Decentralized control and systems, Cooperative navigation, Motion control
\end{keyword}

\end{frontmatter}

\section{INTRODUCTION}
\label{sec:Intro}
The automotive industry evolves towards autonomous vehicles, promising more energy efficient travels, reduced accidents due to the elimination of human error, and higher traffic efficiency. 
Following the taxonomy proposed in \cite{Mariani2020}, lane changing (or ramp merging) can be classified as competitive and task-oriented, and is one of the key coordination problems for autonomous vehicles, for which several approaches are proposed with differing amount of decision autonomy for the individual vehicles.
Approaches with a centralized controller are often applied in a platooning scenario, where vehicles communicate with each other (V2V) or with a coordinator. In \cite{Lu2003} and \cite{Torres2017}, an automated ramp merging manoeuvre is proposed with V2V communication and a centralized coordinator.
In \cite{Awal2013}, a centralized approach is used but all calculations are carried out on a leader vehicle.
In \cite{SCHOLTE2022103511}, the platoon merging is proposed without coordinator. Instead, a combination of pre-existing platooning controllers and MPC is used.
In \cite{Werling2010}, an optimal trajectory is planned for vehicle following, velocity keeping and collision avoidance. Other works explore reinforcement learning for decision making and control of lane changing situations \cite[]{Shi2019}.

In order to ensure the satisfaction of safety constraints by means of ensuring the invariance of the set of "safe states", we use Control Barrier Functions (CBF) \cite[]{Wieland2007}. 
Control Lyapunov Functions (CLF) can be used in combination with CBFs in a quadratic program (CLF-CBF-QP) \cite[]{Romdlony2014}. Whereas CBFs work as a safety filter and guarantee the satisfaction of safety constraints, CLFs ensure asymptotic stability. For constraints concerning states that cannot be "directly" controlled, so-called higher order constraints, \cite{Tan2021} and \cite{Xiao2021} propose high-order control barrier functions (HOCBF).

In the context of vehicle coordination, approaches based on the combination of CBF and CLF have been considered in multiple works. A CLF-CBF-QP is used in \cite{Ames2014a} for adaptive cruise control. In \cite{he2021lane-change-cbf}, the CLF-CBF-QP approach is used in combination with a rule-based control strategy. In \cite{Xiao2021a}, CLFs and CBFs are used in an optimization problem to find a collision-free trajectory that leads to the least violation in a rule priority structure.

In this paper, we propose a novel coordination approach for lane switching and adaptive cruise control. It is based on the assumption made by one vehicle that neighboring vehicles behave in a particular way. In return, the vehicle guarantees that it will exhibit the same behavior towards its neighbors. In order to account for the complexity of the lane switching task, coordination functions are introduced such that CLF-CBF-QP approaches become applicable.
The proposed control strategy is completely decentralized and only relies on sensor measurements or V2V communication. It can be combined with a high level traffic coordinator that prescribes reference velocities and times for lane switching. However, this is not necessary for the provided safety guarantees.

The remainder is as follows. Sec.~\ref{sec:Prelim} reviews CBFs and CLFs and introduces their higher order version.
Sec.~\ref{sec:CtrlAppr} presents the control approach and derives safety guarantees. The controller's effectiveness is demonstrated with a simulation in Sec.~\ref{sec:Sim}. Concluding remarks are given in Sec.~\ref{sec:Con}.

\section{PRELIMINARIES}
\label{sec:Prelim}
We consider an \emph{input-affine system}
\begin{equation}
    \dot{\bm{x}} = f(\bm{x}) + g(\bm{x})\bm{u},
    \label{eqn:affineSys}
\end{equation}
with initial condition $\bm{x}(t_0)=\bm{x}_0$, $\bm{x} \in \mathcal{X} \subseteq \mathbb{R}^n$, $\bm{u} \in \mathcal{U} \subseteq \mathbb{R}^m$ and $\mathcal{X}$, $\mathcal{U}$ denote the state and input space, respectively. The functions $f:\mathcal{X} \rightarrow \mathbb{R}^n$ and $g:\mathcal{X} \rightarrow \mathbb{R}^{n \times m}$ are continuous and locally Lipschitz.
A \emph{class $\mathcal{K}$ function} $\alpha: \mathbb{R}_{\geq0} \rightarrow \mathbb{R}_{\geq0}$ is a continuous and strictly increasing function with $\alpha(0) = 0$ \cite[]{Khalil2015}. Furthermore, an extended class $\mathcal{K}$ function $\gamma: \mathbb{R} \rightarrow \mathbb{R}$ is continuous and strictly increasing function with $\gamma(0) = 0$. Since higher-order systems are considered, the \emph{relative degree $r$} of a function is defined.
\begin{definition} (Relative degree \cite{Khalil2015}): Let $b: \mathcal{X} \rightarrow \mathbb{R}$ be a $r^{th}$-order differentiable function. The function $b$ has relative degree $r$ on $\mathcal{X}$ with respect to \eqref{eqn:affineSys} if
\begin{equation}
\begin{aligned}
    L_g\,L_f^{i}\,b(\bm{x}) & = 0, \quad i = 1, 2, ..., r-2;\\ 
    L_g\, L_f^{r-1}\,b(\bm{x}) & \neq 0
\end{aligned}
\end{equation}
for all $\bm{x} \in \mathcal{X}$.
\label{def:reldeg}
\end{definition}
$L_f b(\bm{x})$ and $L_g b((\bm{x}))$ denote the Lie derivatives of $b(\bm{x})$ along the vector fields $f$ and $g$, respectively. 

\subsection{High-Order Barrier Functions}
In order to ensure safety, the concept of \emph{High-Order Control Barrier Functions (HOCBF)} is introduced. For a differentiable function $b: \mathcal{X} \rightarrow \mathbb{R}$, we define the superlevel set $\mathcal{C}$ as 
\begin{equation}
    \mathcal{C}:= \{\bm{x} \in \mathcal{X}\:|\: b(\bm{x}) \geq 0\}.
    \label{eqn:CSets}
\end{equation}
Moreover, we define $\psi_i : \mathcal{X} \times \mathcal{T} \rightarrow \mathbb{R}$, $i = \{1,2,...,r\}$, and $\mathcal{T} = [t_1,t_2] \subseteq \mathbb{R}$ as closed time interval, as
\begin{subequations}
\label{eqn:psiHOBF}
\begin{align}
    \psi_0(\bm{x}) & = b(\bm{x}),\label{eqn:psiHOBFa}\\
    \psi_i(\bm{x}) & = \dot{\psi}_{i-1}(\bm{x})+\gamma_i(\psi_{i-1}(\bm{x})),\label{eqn:psiHOBFb}
\end{align}
\end{subequations}
where $\gamma_i$ is an extended class $\mathcal{K}$ function. For each function $\psi_i(\bm{x})$, the corresponding set $\mathcal{C}_i$, $i = \{1,2,...,r\}$, is defined as
\begin{equation}
    \mathcal{C}_i := \{\bm{x} \in \mathcal{X} \:|\: \psi_{i-1} \geq 0\}.
    \label{eqn:CiSets}
\end{equation}
\begin{definition}[High-Order Control Barrier Function (HOCBF)]
Let functions $\psi_i(\bm{x})$ and sets $\mathcal{C}_i$, $i=\{1,2,...,r-1\}$, be defined by~\eqref{eqn:psiHOBFb} and~\eqref{eqn:CiSets}, respectively.
The differentiable function $b: \mathcal{X} \rightarrow \mathbb{R}$ with relative degree $r$ is a \emph{High-Order Control Barrier Function (HOCBF)} for control system \eqref{eqn:affineSys}, if there exists an extended class $\mathcal{K}$ function $\gamma_r$ such that for all $\bm{x} \in \mathcal{X}$
\begin{multline}
    \sup_{\bm{u}\in \mathcal{U}}[L_f\,\psi_{r-1}(\bm{x})+L_g \,\psi_{r-1}(\bm{x})\, \bm{u}] \geq -\gamma_r(\psi_{r-1}(\bm{x})).\label{eqn:HOCBF}
\end{multline}
\label{def:HOCBF}
\end{definition}
If $r=1$, we call $b(\bm{x})$ a control barrier function (CBF) \cite[]{Ames2017}.
The above definition leads us to input sets
\begin{multline}
    \mathcal{U}_{\text{HOCBF}}(\bm{x}) := \{ u \in \mathcal{U} \:|\: L_f\,\psi_{r-1}(\bm{x})+L_g \,\psi_{r-1}(\bm{x})\, \bm{u}\\\geq -\gamma_r(\psi_{r-1}(\bm{x}))\}.\label{eqn:UAHOCBF}
\end{multline}
\begin{theorem}
\cite[]{Tan2021}: Consider an HOCBF $b$ for control system~\eqref{eqn:affineSys}. Then any locally Lipschitz continuous control $\bm{u}(\bm{x}) \in \mathcal{U}_{\text{HOCBF}}(\bm{x})$ applied to system \eqref{eqn:affineSys} renders the set $\mathcal{C} := \bigcap_{i=1}^r \mathcal{C}_i$ forward invariant and asymptotically stable.
\label{thm:CifwdInv}
\end{theorem}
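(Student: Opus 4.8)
The plan is to prove the statement by a cascade (backward-induction) argument built on the Comparison Lemma for scalar differential inequalities. The starting observation is that, along any closed-loop trajectory of \eqref{eqn:affineSys} under a feedback $\bm{u}(\bm{x}) \in \mathcal{U}_{\text{HOCBF}}(\bm{x})$, the time derivative of $\psi_{r-1}$ is exactly $\dot{\psi}_{r-1} = L_f\,\psi_{r-1}(\bm{x}) + L_g\,\psi_{r-1}(\bm{x})\,\bm{u}(\bm{x})$, so the defining inequality in \eqref{eqn:UAHOCBF} can be rewritten as the scalar differential inequality $\dot{\psi}_{r-1} \geq -\gamma_r(\psi_{r-1})$. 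Since $\bm{u}(\bm{x})$ is locally Lipschitz, the closed-loop vector field is locally Lipschitz and solutions exist and are unique, so the Comparison Lemma applies to each $\psi_{i-1}$ viewed as a scalar function of time.

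First I would establish forward invariance of the innermost set $\mathcal{C}_r = \{\bm{x}\mid\psi_{r-1} \geq 0\}$. Comparing $\psi_{r-1}(\bm{x}(t))$ with the solution $y(t)$ of the scalar ODE $\dot{y} = -\gamma_r(y)$, $y(t_0) = \psi_{r-1}(\bm{x}_0)$, and using that $\gamma_r$ is an extended class $\mathcal{K}$ function (so $y \equiv 0$ is an equilibrium and $\gamma_r(0)=0$), one obtains $\psi_{r-1}(\bm{x}(t)) \geq y(t) \geq 0$ whenever $\psi_{r-1}(\bm{x}_0) \geq 0$. I would then propagate this down the chain by backward induction: having shown $\psi_i(\bm{x}(t)) \geq 0$ for all $t \geq t_0$, the definition \eqref{eqn:psiHOBFb}, $\psi_i = \dot{\psi}_{i-1} + \gamma_i(\psi_{i-1})$, yields $\dot{\psi}_{i-1} \geq -\gamma_i(\psi_{i-1})$, and the same comparison argument gives forward invariance of $\mathcal{C}_i$. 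Since the initial state lies in $\mathcal{C} = \bigcap_{i=1}^r \mathcal{C}_i$, every $\psi_{i-1}(\bm{x}_0) \geq 0$, hence every $\psi_{i-1}(\bm{x}(t)) \geq 0$, so $\bm{x}(t) \in \mathcal{C}$ for all $t \geq t_0$ and $\mathcal{C}$ is forward invariant.

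For asymptotic stability of $\mathcal{C}$ I would reuse the same comparison structure, now starting from states with some $\psi_{i-1}(\bm{x}_0) < 0$. For the scalar ODE $\dot{y} = -\gamma_r(y)$ with $y(t_0) < 0$, the extended class $\mathcal{K}$ property gives $\gamma_r(y) < 0$, hence $\dot{y} > 0$, so $y(t)$ increases monotonically towards the equilibrium $0$; the Comparison Lemma then yields $\liminf_{t\to\infty} \psi_{r-1}(\bm{x}(t)) \geq 0$, i.e. the trajectory is attracted to $\mathcal{C}_r$. Propagating this convergence down the cascade exactly as in the invariance step shows that $b(\bm{x}(t))$ is driven towards the nonnegative region, which together with the invariance establishes asymptotic stability of $\mathcal{C}$.

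The main obstacle I anticipate is making the cascade rigorous at the regularity level: each application of the Comparison Lemma requires $\psi_{i-1}$ to be an absolutely continuous (indeed differentiable) function of time along the closed-loop trajectory, which must be traced back to $b$ being $r$-times differentiable, the $\gamma_i$ being sufficiently regular, and the feedback $\bm{u}(\bm{x})$ being locally Lipschitz so that existence and uniqueness of $\bm{x}(t)$ hold. A secondary subtlety is that the backward induction must carry the invariance of the outer sets while only the single inequality at level $r$ is directly enforced by $\mathcal{U}_{\text{HOCBF}}(\bm{x})$; care is needed to confirm that each intermediate inequality $\dot{\psi}_{i-1} \geq -\gamma_i(\psi_{i-1})$ genuinely becomes available once $\psi_i \geq 0$ has been shown, rather than being assumed.
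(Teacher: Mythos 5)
The paper itself gives no proof of this theorem---it is imported verbatim from \cite{Tan2021}---and your backward-induction comparison-lemma cascade is essentially the argument used in that reference, so your proposal is correct and matches the intended proof, including your (valid) observation that each intermediate inequality $\dot{\psi}_{i-1} \geq -\gamma_i(\psi_{i-1})$ is an identity available by definition~\eqref{eqn:psiHOBFb} once $\psi_i \geq 0$ is established. The one step to tighten is the attractivity propagation: below level $r$ you only know $\liminf_{t\to\infty}\psi_i(\bm{x}(t)) \geq 0$ rather than $\psi_i \geq 0$, so the inequality for $\psi_{i-1}$ holds only up to a vanishing perturbation, and the comparison must be run against $\dot{y} = -\gamma_i(y) - \epsilon$ for arbitrary $\epsilon > 0$ (using that $\gamma_i$ is strictly increasing, hence $\gamma_i^{-1}(-\epsilon) \to 0$), with $\epsilon \to 0$ taken at the end---not ``exactly as in the invariance step.''
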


\subsection{High Order Control Lyapunov Function}
Analogously to HOCBFs, we introduce \emph{High Order Control Lyapunov Functions (HOCLF)}, in order to ensure asymptotic stability.
We define a series of functions $\eta_i : \mathcal{X} \rightarrow \mathbb{R}$, $i = \{1,2,...,r\}$,
\begin{subequations}
 \label{eqn:etaHOLF}
    \begin{align}
    \eta_0(\bm{x}) & = -\dot{V}(\bm{x})-\alpha(V(\bm{x})),\label{eqn:etaHOLFa}\\
    \eta_i(\bm{x}) & = \dot{\eta}_{i-1}(\bm{x})+\mu_i(\eta_{i-1}(\bm{x})),\label{eqn:etaHOLFb}
    \end{align}
\end{subequations}
where $V(\bm{x})$ is a Lyapunov function \cite[]{Khalil2015}, $\alpha$ is a class $\mathcal{K}$ function and $\mu_i$ are extended class $\mathcal{K}$ functions. Superlevel sets $\mathcal{S}_i$, $i=\{1,2,...,r\}$, are defined as
\begin{equation}
    \mathcal{S}_i := \{ \bm{x} \in \mathcal{X} \:|\: \eta_{i-1}(\bm{x}) \geq 0 \}.
    \label{eqn:SiSets}
\end{equation}
\begin{definition}[High-Order Control Lyapunov Function (HOCLF)] \label{def:HOCLF}
Let functions $\eta_i(\bm{x})$ and sets $\mathcal{S}_i$, $i = \{1,2,...,r\}$, be defined by~\eqref{eqn:etaHOLF} and~\eqref{eqn:SiSets}, respectively. A differentiable function $V: \mathcal{X} \rightarrow \mathbb{R}_{\geq 0}$ with
$V(0) = 0$ and $V(\bm{x})>0$ for all  $x \in \mathcal{X}$, $\bm{x}\not = 0$,  
is a \emph{HOCLF} for \eqref{eqn:affineSys} if there exist differentiable extended class $\mathcal{K}$ functions $\mu_i$, $i = \{1,2,...,r\}$, and a class $\mathcal{K}$ function $\alpha$ such that
\begin{subequations}
\label{eqn:HOCLF2}
\begin{align}
    \dot{V}(\bm{x}) & \leq -\alpha(V(\bm{x})),\label{eqn:HOCLF2a}\\
    \sup_{\bm{u}\in \mathcal{U}}[L_f\,\eta_{r-1}(\bm{x})+L_g \,\eta_{r-1}(\bm{x})\, \bm{u}] & \geq -\mu_r(\eta_{r-1}(\bm{x})).\label{eqn:HOCLF2b}
    \end{align}
\end{subequations}
\end{definition}
If $r=1$, we call $V(\bm{x})$ Control Lyapunov Function (CLF) \cite[]{Khalil2015}.
For an HOCLF, we define the input set
\begin{multline}
\label{eqn:UAHOCLF}
    \mathcal{U}_{\text{HOCLF}}(\bm{x}) = \{ u \in \mathcal{U} \:|\: L_f\,\eta_{r-1}(\bm{x})+L_g \,\eta_{r-1}(\bm{x})\, \bm{u}\\\geq -\mu_r(\eta_{r-1}(\bm{x}))\}.
\end{multline}
\begin{theorem}
Let the origin be an equilibrium point of~\eqref{eqn:affineSys}, i.e., $0=f(0)+g(0)\bm{u}$ for some $\bm{u}\in\mathcal{U}$. Furthermore, let $V(\bm{x})$ be a HOCLF and $\dot{V}(\bm{x}(0))\leq -\alpha(V(\bm{x}(0)))$. Then, any locally Lipschitz continuous controller $\bm{u}(\bm{x}) \in \mathcal{U}_{\text{HOCLF}}(\bm{x})$ asymptotically stabilizes the origin.
\label{thm:HOCLFasympStab}
\end{theorem}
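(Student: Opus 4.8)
The plan is to decouple the statement into two standard pieces: first, that the chosen controller keeps the closed-loop trajectory in the region where the Lyapunov decrease $\dot V \le -\alpha(V)$ holds for \emph{all} time, and second, that this differential inequality together with the positive definiteness of $V$ forces convergence to the origin. The bridge between the two is the observation that the cascade $\eta_0,\dots,\eta_{r-1}$ in \eqref{eqn:etaHOLF} is structurally identical to the HOCBF cascade $\psi_0,\dots,\psi_{r-1}$ in \eqref{eqn:psiHOBF}, with $\eta_0$ playing the role of the barrier function $b$, the $\mu_i$ playing the role of the $\gamma_i$, and the sets $\mathcal{S}_i$ playing the role of the $\mathcal{C}_i$. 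This lets me borrow Theorem~\ref{thm:CifwdInv} almost verbatim. Throughout, local Lipschitz continuity of $\bm u(\bm x)$ guarantees existence and uniqueness of closed-loop solutions, which is what makes the invariance and comparison arguments well posed.

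First I would rewrite the defining inequality of $\mathcal{U}_{\text{HOCLF}}(\bm x)$ in \eqref{eqn:UAHOCLF} along closed-loop trajectories. Since $L_f\eta_{r-1}(\bm x)+L_g\eta_{r-1}(\bm x)\,\bm u=\dot\eta_{r-1}(\bm x)$, membership $\bm u(\bm x)\in\mathcal{U}_{\text{HOCLF}}(\bm x)$ is exactly the pointwise inequality $\dot\eta_{r-1}(\bm x)+\mu_r(\eta_{r-1}(\bm x))\ge 0$, i.e.\ $\eta_r(\bm x)\ge 0$. This is precisely the HOCBF condition for the function $b=\eta_0$, so I would then invoke Theorem~\ref{thm:CifwdInv} applied to $\eta_0$ (in place of $b$), to $\mu_i$ (in place of $\gamma_i$), and to $\mathcal{S}_i$ (in place of $\mathcal{C}_i$), concluding that the controller renders $\bigcap_{i=1}^r\mathcal{S}_i$ forward invariant. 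Because the hypothesis $\dot V(\bm x(0))\le-\alpha(V(\bm x(0)))$ is exactly $\bm x(0)\in\mathcal{S}_1$, forward invariance yields $\eta_0(\bm x(t))\ge 0$, that is $\dot V(\bm x(t))\le-\alpha(V(\bm x(t)))$, for all $t\ge 0$.

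With the decrease condition secured for all time, I would close the argument by the standard Lyapunov/comparison criterion. Here $V$ is positive definite with $V(0)=0$, and since $\alpha\in\mathcal{K}$ is strictly increasing with $\alpha(0)=0$, one has $\dot V(\bm x(t))\le-\alpha(V(\bm x(t)))<0$ whenever $\bm x(t)\neq 0$. Thus $V$ is strictly decreasing along nonzero trajectories, and comparing the scalar inequality $\dot V\le-\alpha(V)$ with the auxiliary system $\dot v=-\alpha(v)$, $v(0)=V(\bm x(0))$, gives $V(\bm x(t))\to 0$; by positive definiteness and continuity of $V$ this forces $\bm x(t)\to 0$, while monotonicity of $V$ along trajectories supplies stability in the sense of Lyapunov. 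Together these establish asymptotic stability of the origin.

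The main obstacle is the propagation of the initial inequality to all times: the input set only directly controls the top of the cascade via $\eta_r\ge 0$, and getting from there down to $\eta_0\ge 0$ is exactly the iterated comparison-lemma argument underlying Theorem~\ref{thm:CifwdInv}. Crucially, that argument requires the trajectory to start inside \emph{every} $\mathcal{S}_i$, not merely $\mathcal{S}_1$, since a negative value of some intermediate $\eta_{i-1}(\bm x(0))$ need not recover to nonnegativity in finite time. I would therefore make the standing assumption $\bm x(0)\in\bigcap_{i=1}^r\mathcal{S}_i$ explicit, as the direct analogue of the HOCBF requirement $\bm x(0)\in\mathcal{C}$; the stated hypothesis $\dot V(\bm x(0))\le-\alpha(V(\bm x(0)))$ is only its $i=1$ instance. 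A secondary, milder technical point is the regularity needed for the comparison step: strictly speaking one wants $\alpha$ and the $\mu_i$ locally Lipschitz so that the comparison ODEs admit unique solutions and deliver a clean class-$\mathcal{KL}$ convergence estimate.
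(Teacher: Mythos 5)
Your proof follows essentially the same route as the paper's: the paper likewise uses the initial condition $\dot{V}(\bm{x}(0))\leq -\alpha(V(\bm{x}(0)))$ to place the trajectory in the set where \eqref{eqn:HOCLF2a} holds, invokes Theorem~\ref{thm:CifwdInv} on the $\eta$-cascade (with \eqref{eqn:HOCLF2b} supplying the top-level inequality, your $\eta_r\geq 0$) to render that set invariant, and then concludes asymptotic stability from the standard Lyapunov theorem \cite[Thm.~3.3]{Khalil2015}; your comparison-lemma closing step is that same argument made slightly more explicit. Your final caveat is well taken and in fact sharper than the paper's own treatment: Theorem~\ref{thm:CifwdInv} only guarantees forward invariance of the full intersection $\bigcap_{i=1}^{r}\mathcal{S}_i$, whereas the stated hypothesis is only membership in $\mathcal{S}_1$, and the paper's one-line proof silently elides this, so your explicit standing assumption $\bm{x}(0)\in\bigcap_{i=1}^{r}\mathcal{S}_i$ is a needed tightening of the hypotheses rather than an extraneous addition. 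One could attempt to rescue the weaker hypothesis via the asymptotic-stability-of-the-set clause of Theorem~\ref{thm:CifwdInv}, but that only yields $\eta_0(\bm{x}(t))\geq 0$ asymptotically, not for all $t\geq 0$, which does not support the clean invariance-plus-Lyapunov argument that both you and the paper employ.
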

\begin{proof}
As $\dot{V}(\bm{x}(0))\leq -\alpha(V(\bm{x}(0)))$ holds initially, the set where \eqref{eqn:HOCLF2a} holds is rendered invariant by \eqref{eqn:HOCLF2b} according to Theorem \ref{thm:CifwdInv}. Then, asymptotic stability follows from \cite[Thm. 3.3]{Khalil2015}.
\end{proof}

\subsection{Optimization Problem}
Altogether, the \emph{optimization problem} for computing input~$\bm{u}$ is given as a quadratic program (QP)
\begin{subequations}
\label{eqn:QP1}
\begin{align}
\min_{\bm{u} \:\in\: \mathcal{U},\;\delta \:\in\: \mathbb{R}} \quad & 
\frac{1}{2}\bm{u}^T H \bm{u} + p \cdot \delta^2\\
\textrm{s.t.} \quad & 
\begin{aligned}
   L_f\,\psi_{r-1}(\bm{x})+L_g \,\psi_{r-1}(\bm{x})\, \bm{u}\\\geq -\gamma_r(\psi_{r-1}(\bm{x})) \label{eqn:QP1b}\end{aligned}\\
  & \begin{aligned}
  L_f\,\eta_{r-1}(\bm{x})+L_g \,\eta_{r-1}(\bm{x})\, \bm{u}\\\geq -\mu_r(\eta_{r-1}(\bm{x})) + \delta, \label{eqn:QP1c}
  \end{aligned}
\end{align}
\end{subequations}
with a positive-definite matrix $H \in \mathbb{R}^{m \times m}$, a scalar $p > 0$. Constraint \eqref{eqn:QP1b} is a safety constraint; \eqref{eqn:QP1c} is a stabilization constraint relaxed with a slack variable $\delta$. 

Next, we generalize the QP for multiple HOCBFs $b_k \in \{b_1, b_2, ..., b_q\}$, $q\geq1$, and HOCLFs $V_j \in \{V_1,V_2, ... , V_h\}$, $h\geq1$, such that we can take multiple objectives into account. For each $b_k$, we denote the input set~\eqref{eqn:UAHOCBF} as $\mathcal{U}_{k,\text{HOCBF}}$, and for each $V_j$ the input set~\eqref{eqn:UAHOCLF} as $\mathcal{U}_{j, \text{HOCLF}}$. Similarly for each function $b_k$, the corresponding functions $\psi_i$, $i=\{1,...,r\}$, as defined in~\eqref{eqn:psiHOBF} are denoted by $\psi_{k,i}$, and for each function~$V_j$, the corresponding functions $\eta_i$, $i=\{1,...,r\}$, as defined in~\eqref{eqn:etaHOLF} are denoted by~$\eta_{j,i}$.
Moving all terms in \eqref{eqn:QP1b}-\eqref{eqn:QP1c} to the left-hand side and summarizing them in stack vectors yields
\begin{subequations}
\label{eqn:mult}
\begin{align}
    \psi(\bm{x}) &:= \begin{bmatrix}
    L_f\,\psi_{1,r-1}(\bm{x})+L_g \,\psi_{1,r-1}(\bm{x})\, \bm{u}\\+\gamma_r(\psi_{1,r-1}(\bm{x}))\\
    \vdots\\
    L_f\,\psi_{q,r-1}(\bm{x})+L_g \,\psi_{q,r-1}(\bm{x})\, \bm{u}\\+\gamma_r(\psi_{q,r-1}(\bm{x}))\\
    \end{bmatrix},
    \label{eqn:psiconj}
\\
    \eta(\bm{x},\bm{\delta}) &:= \begin{bmatrix}
    L_f\,\eta_{1, r-1}(\bm{x})+L_g \,\eta_{1,r-1}(\bm{x})\, \bm{u} \\+\mu_{1,r}(\eta_{1,r-1}(\bm{x}))-\delta_1\\
    \vdots\\
    L_f\,\eta_{h, r-1}(\bm{x})+L_g \,\eta_{h,r-1}(\bm{x})\, \bm{u} \\+\mu_{h,r}(\eta_{h,r-1}(\bm{x}))-\delta_h\\
    \end{bmatrix}, \label{eqn:etaconj}
\end{align}
\end{subequations}
where $\bm{\delta} = [\delta_1, ... , \delta_h]^T$ is the vector of all slack variables.
Then, the QP with multiple HOCLFs and HOCBFs is
\begin{subequations}
\label{eqn:QP2}
\begin{align}
\min_{\bm{u} \:\in\: \mathcal{U},\;\delta \:\in\: \mathbb{R}} \quad & 
\frac{1}{2}\bm{u}^T H \bm{u} + \frac{1}{2}\bm{\delta}^T P \bm{\delta}\\
\textrm{s.t.} \quad & 
    \psi(\bm{x})\geq 0\\
    & \eta(\bm{x},\bm{\delta}) \geq 0,
\end{align}
\end{subequations}
with positive-definite matrices $P \in \mathbb{R}^{h \times h}$ and $H \in \mathbb{R}^{m \times m}$.
In order to exclude contradicting objectives, we assume that $\mathcal{C} := (\bigcap_{k=1}^q \mathcal{C}_k) \neq \emptyset$ and $\mathcal{U}_{\text{HOCBF}}(\bm{x}) := (\bigcap_{k=1}^q \mathcal{U}_{k,\text{HOCBF}}(\bm{x})) \neq \emptyset$, $\forall \bm{x} \in \mathcal{C}$. In \cite{Tan2022}, such HOCBFs are called 
\emph{compatible}.
Analogously, we say that HOCLFs $V_j$ are compatible if $\mathcal{S} := (\bigcap_{j=1}^h \mathcal{S}_j) \neq \emptyset$ and $\mathcal{U}_{\text{HOCLF}}(\bm{x}) := (\bigcap_{j=1}^h \mathcal{U}_{j,\text{HOCLF}}(\bm{x})) \neq \emptyset$ for all $\bm{x} \in \mathcal{S}$.
As a direct consequence of Thm. \ref{thm:CifwdInv} and \ref{thm:HOCLFasympStab}, we obtain the following result.
\begin{corollary}
Consider the optimization problem \eqref{eqn:QP2}. If the (HO)CBFs $b_k$, $k=\{1,...,q\}$, are compatible, then $\bm{u}(\bm{x})$ minimizing \eqref{eqn:QP2} renders $\mathcal{C}$ forward invariant and asymptotically stable on $\mathcal{X}$. 
Let the origin be an equilibrium point of \eqref{eqn:affineSys}.
If additionally $\bm{\delta}=0$ and (HO)CLF in \eqref{eqn:etaconj} are compatible, i.e., $\mathcal{S} \neq \emptyset$ and $\mathcal{U}_{\text{HOCLF}}(\bm{x}) \neq \emptyset$, then the origin is asymptotically stable.
\label{cor:QP}
\end{corollary}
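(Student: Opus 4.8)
The plan is to reduce the corollary to a direct application of Theorems~\ref{thm:CifwdInv} and~\ref{thm:HOCLFasympStab}, exploiting the fact that the stacked constraints in~\eqref{eqn:QP2} are nothing but the simultaneous enforcement of the individual single-function input constraints. First I would observe that, with $\psi$ as in~\eqref{eqn:psiconj}, the constraint $\psi(\bm{x})\geq 0$ holds componentwise exactly when $\bm{u}\in\mathcal{U}_{k,\text{HOCBF}}(\bm{x})$ for every $k\in\{1,\dots,q\}$; hence the $\bm{u}$-projection of the feasible set of~\eqref{eqn:QP2} is contained in $\mathcal{U}_{\text{HOCBF}}(\bm{x})=\bigcap_{k=1}^q\mathcal{U}_{k,\text{HOCBF}}(\bm{x})$. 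Compatibility of the HOCBFs guarantees that this intersection is nonempty on $\mathcal{C}$, so the QP is feasible there and its minimizer satisfies $\bm{u}(\bm{x})\in\mathcal{U}_{\text{HOCBF}}(\bm{x})$, i.e.\ $\bm{u}(\bm{x})\in\mathcal{U}_{k,\text{HOCBF}}(\bm{x})$ for each $k$.

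Next I would invoke Theorem~\ref{thm:CifwdInv} separately for each HOCBF $b_k$. Since $\bm{u}(\bm{x})\in\mathcal{U}_{k,\text{HOCBF}}(\bm{x})$, the theorem renders each $\mathcal{C}_k=\bigcap_{i=1}^r\mathcal{C}_{k,i}$ forward invariant and asymptotically stable under the closed loop. Forward invariance is preserved under intersection, so $\mathcal{C}=\bigcap_{k=1}^q\mathcal{C}_k$ is forward invariant; and because every $\mathcal{C}_k$ attracts trajectories of the \emph{same} closed-loop system, trajectories converge to $\mathcal{C}$, yielding asymptotic stability of $\mathcal{C}$ on $\mathcal{X}$.

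For the second claim I would set $\bm{\delta}=0$ in~\eqref{eqn:etaconj}, so that $\eta(\bm{x},\bm{\delta})\geq 0$ collapses to $\bm{u}\in\mathcal{U}_{j,\text{HOCLF}}(\bm{x})$ for every $j\in\{1,\dots,h\}$. Compatibility of the HOCLFs keeps the QP feasible on $\mathcal{S}$, and applying Theorem~\ref{thm:HOCLFasympStab} to any $V_j$ gives asymptotic stability of the origin, provided the origin is an equilibrium of~\eqref{eqn:affineSys} and the initial inequality $\dot{V}_j(\bm{x}(0))\leq-\alpha_j(V_j(\bm{x}(0)))$ holds; the latter is precisely $\eta_{j,0}(\bm{x}(0))\geq 0$, i.e.\ $\bm{x}(0)\in\mathcal{S}_1$.

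The step I expect to be the main obstacle is the hidden regularity requirement: both Theorems~\ref{thm:CifwdInv} and~\ref{thm:HOCLFasympStab} demand a \emph{locally Lipschitz continuous} controller, whereas the corollary only supplies the pointwise minimizer $\bm{u}(\bm{x})$ of~\eqref{eqn:QP2}. Establishing that this minimizer is locally Lipschitz in $\bm{x}$ is the crux of the argument. I would argue that, since $H\succ 0$ and $P\succ 0$, the QP is strictly convex and thus has a unique solution for each $\bm{x}$, and that under a constraint qualification on the (Lipschitz) constraint data $L_g\psi_{k,r-1}$ and $L_g\eta_{j,r-1}$ the solution map is locally Lipschitz by standard parametric-optimization results. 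It is this regularity, rather than the invariance and stability conclusions themselves, that licenses the two theorems and where the genuine verification effort lies.
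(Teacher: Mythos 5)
Your proof follows essentially the same route as the paper, which states the corollary without a written proof, as a direct consequence of Theorems~\ref{thm:CifwdInv} and~\ref{thm:HOCLFasympStab}: the stacked constraints of \eqref{eqn:QP2} enforce $\bm{u}(\bm{x})\in\bigcap_{k=1}^{q}\mathcal{U}_{k,\text{HOCBF}}(\bm{x})$ (and, with $\bm{\delta}=0$, also $\bm{u}(\bm{x})\in\bigcap_{j=1}^{h}\mathcal{U}_{j,\text{HOCLF}}(\bm{x})$), compatibility supplies feasibility, and the two theorems are then applied per function. Your additional observations --- that Theorem~\ref{thm:HOCLFasympStab} implicitly requires $\dot{V}(\bm{x}(0))\leq-\alpha(V(\bm{x}(0)))$ at the initial state, and that local Lipschitz continuity of the QP minimizer must be established (via strict convexity of the objective together with a constraint qualification on the Lipschitz constraint data) before either theorem can be invoked --- are genuine refinements that the paper leaves implicit.
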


\subsection{Control Problem}
\label{sec:CtrlProb}
We consider a road with three lanes, each of width $w$, and vehicles as depicted in Fig.~\ref{fig:CtrlProb}. The neighboring vehicles of the ego-vehicle (E) are denoted with indices according to their position relative to the ego-vehicle. The ego-vehicle determines the position of neighbouring vehicles through its sensors (with omnidirectional sensor range $r_S$) or V2V communication and decides whether they are in front (F), in the back (B), on the same lane (0), on a lane to the right (-1) or on a lane to the left (+1)\footnote{For neighbouring vehicles with identical $x$-coordinate as the ego-vehicle: vehicles on the left lane (+1) are considered to be in front~(F), vehicles on the right lane (-1) are considered to be in the back~(B).}. For example, index +1B denotes a vehicle behind the ego-vehicle on the lane to the left. If in one of these positions no vehicle is within the sensor range, a mock vehicle is placed at distance~$r_S$. Thereby the worst case is assumed that a neighboring vehicle might be located just outside of the sensor range. Moreover, each vehicle has the objective to follow a lane. Each lane is denoted by an integer $\ell \in \lbrace 1,2,3\rbrace$. All vehicles are modeled as (nonholonomic) unicycles with dynamics
\begin{equation}
    \label{eqn:VehDyn}
    \begin{cases}
    \dot{x} = v \cos{(\Psi)}\\
    \dot{y} = v \sin{(\Psi)}\\
    \dot{\Psi} = \omega
    \end{cases}
\end{equation}
with states $[x,y,\Psi]^T$, where $x$, $y$ denote the vehicle's position, $\Psi$ its orientation, and inputs $\bm{u} = [v,\omega]^T$, where $v$ denotes the vehicle's longitudinal velocity and $\omega$ its angular velocity. We denote the states of the ego-vehicle as $\bm{x}_E := [x_E,y_E,\Psi_E]$, of vehicle 0F as $\bm{x}_{0F} := [x_{0F},y_{0F},\Psi_{0F}]$ and correspondingly for the other neighboring vehicles. Furthermore, we define the set of all indices of neighboring vehicles as $\mathcal{N} := \{{+1F}, {+1B}, {0F}, {0B}, {-1F}, {-1B}\}$; the states of the ego-vehicle and its neighbors as $\bm{x}:= [\bm{x_E}^T,\bm{x}^T_{i \in \mathcal{N}}]^T = [\bm{x}^T_E$, $\bm{x}^T_{+1F}$, $\bm{x}^T_{+1B}$, $\bm{x}^T_{0F}$, $\bm{x}^T_{0B}$, $\bm{x}^T_{-1F}$, $\bm{x}^T_{-1B}]^T$.
\begin{figure}[t]
    \centering
    \includegraphics[width=\columnwidth]{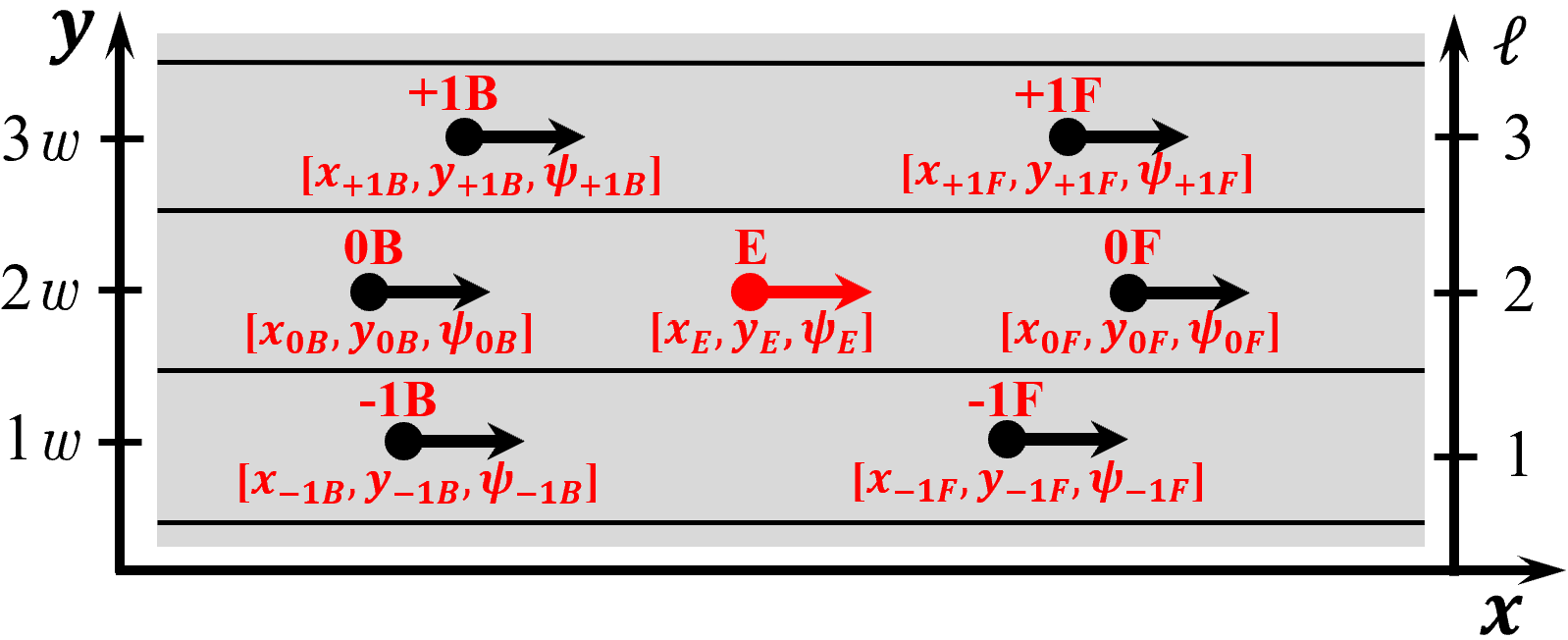}
    \caption{Road with three lanes with width $w$, denoted by $\ell \in \lbrace 1,2,3\rbrace$: indices and states of the ego-vehicle and neighboring vehicles.}
    \label{fig:CtrlProb}
\end{figure}

In this paper, the objective is to develop a decentralized control strategy for
\begin{itemize}
    \item safe lane switching and lane following: ensure a safe distance along the $x$-coordinate between ego-vehicle~E and neighbouring vehicles $i \in \mathcal{N}$ ($\min_{i\in \mathcal{N}}$ $|x_E-x_i|>d$, with safety distance $d\geq0$) and manoeuvre ego-vehicle E to a desired lane ($\lim_{t \to \infty} y_E = y_{\text{ref}}$);
    \item adaptive cruise control (ACC): follow a vehicle in a safe distance ($\min_{i\in \mathcal{N}}$ $|x_E-x_i|>0$) and adjust the velocity ($\lim_{t\to \infty} v_E = \min\{v_{\text{ref}}, v_{0F}\}$).
\end{itemize}
The longitudinal reference velocity $v_{\text{ref}}$ and the vertical reference position on the lane $y_{\text{ref}}$ are set by the driver or a high level traffic coordinator.

\section{CONTROL APPROACH}
\label{sec:CtrlAppr}
\subsection{Reference tracking}
We use an HOCLF to steer the vehicle to the reference lane $y_{\text{ref}}$ and follow it.
A candidate for such an HOCLF is
\begin{equation}
    V(\bm{x}) = \frac{1}{2}(y_{\text{ref}}-y_E)^2.
    \label{eqn:V1}
\end{equation}
\begin{lemma}
    The function $V(\bm{x})$ is a HOCLF for \eqref{eqn:VehDyn} with relative degree $r=2$.
\end{lemma}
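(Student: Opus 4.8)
The plan is to establish the two properties demanded by Definition~\ref{def:HOCLF}: that $V$ is a positive-definite candidate and that it has relative degree $r=2$ in the sense of Definition~\ref{def:reldeg}, and then to check that the control authority suffices for the inequality \eqref{eqn:HOCLF2b}. Since $V$ depends only on the lateral coordinate $y_E$, and $y_E$ is influenced through the heading rather than directly, I regard the steering input $\omega_E$ as the control entering $V$ and treat the longitudinal velocity $v_E$ as an exogenous, separately regulated signal. This casts the ego-dynamics \eqref{eqn:VehDyn} in the control-affine form \eqref{eqn:affineSys} with drift $f(\bm{x}_E)=[\,v_E\cos\Psi_E,\; v_E\sin\Psi_E,\; 0\,]^T$ and $g(\bm{x}_E)=[\,0,\;0,\;1\,]^T$.

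Writing $e:=y_{\text{ref}}-y_E$, the candidate $V=\tfrac12 e^2$ is differentiable, nonnegative, and zero exactly on the target set $\{e=0\}$, so it is positive definite in the tracking-error coordinate (the ``origin'' of Definition~\ref{def:HOCLF} being understood as $e=0$). The central step is the differentiation that fixes the relative degree. One obtains
\[
\dot V = L_f V = -(y_{\text{ref}}-y_E)\,v_E\sin\Psi_E, \qquad L_g V = 0,
\]
so the steering $\omega_E$ is absent from the first derivative. Differentiating once more (taking $v_E$ locally constant, the extra $\dot v_E$ term being $\omega_E$-free) gives
\[
\ddot V = (v_E\sin\Psi_E)^2 - (y_{\text{ref}}-y_E)\,v_E\cos\Psi_E\,\omega_E,
\]
whence $L_gL_fV=-(y_{\text{ref}}-y_E)\,v_E\cos\Psi_E$. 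Because $L_gV=0$ while $L_gL_fV\neq0$ on the operating region $\{v_E\neq0,\;|\Psi_E|<\pi/2,\;y_E\neq y_{\text{ref}}\}$, Definition~\ref{def:reldeg} yields $r=2$ there.

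It then remains to produce the class $\mathcal{K}$ data of Definition~\ref{def:HOCLF}. Choosing, for instance, linear $\alpha$ and $\mu_i$, the function $\eta_0=-\dot V-\alpha(V)$ is input-free (since $L_gV=0$), and the steering reappears in $\dot\eta_0$ with coefficient $-L_gL_fV=(y_{\text{ref}}-y_E)v_E\cos\Psi_E$, nonzero on the same region. Hence the control-affine term in the top-level inequality \eqref{eqn:HOCLF2b} is nonzero, so the left-hand supremum over $\omega_E$ can be made to satisfy the bound for any admissible $\mu_r$; the inequality \eqref{eqn:HOCLF2a} merely delimits the set $\mathcal{S}_1$ that Theorem~\ref{thm:HOCLFasympStab} renders invariant. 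I expect the main obstacle to be bookkeeping around this coefficient: it vanishes exactly at the target lane $y_E=y_{\text{ref}}$ and at $\Psi_E=\pm\pi/2$, so the relative-degree-$2$ claim and the HOCLF inequality only hold on the stated operating region; one must also be careful that the quadratic form $\tfrac12 e^2$ (rather than $|e|$) interacts correctly with the class $\mathcal{K}$ bound $\alpha$ near $e=0$, and acknowledge that $v_E$ does enter $\dot V$, which is why it is legitimate to absorb it into the drift only when it is regulated separately.
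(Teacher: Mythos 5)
Your proof is correct and follows essentially the same route as the paper's: check differentiability and positive definiteness of $V$ in the error coordinate, observe that $L_g V$ vanishes so $r\neq 1$, compute $L_g L_f V = -(y_{\text{ref}}-y_E)\,v_E\cos\Psi_E \neq 0$ to conclude $r=2$, and invoke Definition~\ref{def:HOCLF}. You are in fact somewhat more careful than the paper, which asserts $L_g L_f V \neq 0$ for all $\bm{x}\in\mathcal{X}$, whereas you correctly flag that this coefficient vanishes at $y_E = y_{\text{ref}}$, $\Psi_E = \pm\pi/2$, and $v_E = 0$, and you make explicit the drift/control split (absorbing the separately regulated $v_E$ into $f$ and steering through $\omega_E$) that the paper's Lie-derivative computation uses only implicitly.
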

\begin{proof}
$V(\bm{x})$ is differentiable, $V(y_{\text{ref}}) = 0$ and $V(\bm{x})>0$, $\forall \bm{x} \in \mathcal{X}$, $y_E \neq y_{\text{ref}}$. Based on \eqref{eqn:etaHOLFa}, we get $\eta_{1,0}(\bm{x}) = (y_{\text{ref}}-y_E) v_E \sin{(\Psi_E)} - \alpha(V(\bm{x}))$. Since $L_g V(\bm{x}) = 0$ for $\Psi_E = 0$, it is $r \neq 1$. However, $L_g L_f V(\bm{x}) \neq 0$ $\forall \bm{x} \in \mathcal{X}$, hence $r=2$, and $V(\bm{x})$ is a HOCLF based on Def. \ref{def:HOCLF}.
\end{proof}
As the vehicle's velocity is a control input, we can directly incorporate the objective on the reference velocity $v_{\text{ref}}$ by
\begin{equation}
    v_E-v_{\text{ref}} = 0.
\label{eqn:vConstr}
\end{equation}
Since safety constraints, which are introduced next, have always precedence over other control objectives, we relax stability and tracking constraints \eqref{eqn:V1} and \eqref{eqn:vConstr} below with slack variables.

\begin{table*}[t]
	\centering
	\begin{tabular}{|c|c|c|} 
		\hline
		\bf{Barrier function} & \bf{Type} & \bf{Description} \\ [0.5ex]
		\hline\hline
		$b_1(\bm{x}) = x_{0F}-x_E-\tau_D v_E$ & CBF & Keeping distance in $x$-direction to 0F.\\ 
		\hline
		$b_{2}(\bm{x}) = y_E -y_{\text{min}}^\ell+ w \cdot \lambda(\theta(x_E,x_{-1B},v_{-1B}))$ & HOCBF & y lower bound based on distance to -1B.\\
		\hline
		$b_{3}(\bm{x}) = y_E -y_{\text{min}}^\ell+ w \cdot \lambda(\theta(x_{-1F},x_E,v_E))$
		& HOCBF & y lower bound based on distance to -1F. \\
		\hline
		$b_{4}(\bm{x}) = w\cdot\lambda(\theta(x_E,x_{+1B},v_{+1B}))+y_{\text{max}}^\ell-y_E$
		& HOCBF & y upper bound based on distance to +1B.\\
		\hline
		$b_{5}(\bm{x}) = w \cdot \lambda(\theta(x_{+1F},x_E,v_E))+y_{\text{max}}^\ell-y_E$ & HOCBF & y upper bound based on distance to +1F.\\
		\hline
		$b_6(\bm{x}) = x_{-1F}-x_E-\tau_D v_E \cdot \sigma(\rho(y_E,y_{-1F}))$ & CBF & Keeping distance in $x$-direction to -1F.\\
		\hline
		$b_7(\bm{x}) = x_{+1F}-x_E-\tau_D v_E \cdot \sigma(\rho(y_{+1F},y_E))$ & CBF & Keeping distance in $x$-direction to +1F.\\
		\hline
	\end{tabular}
	\caption{Overview of barrier functions.}
	\label{tab:barrierfct}
	\vspace{-0.2cm}
\end{table*}

\subsection{Construction of safety constraints}
\paragraph*{Safe distance keeping to preceding vehicle:} 
We choose the continuous differentiable candidate CBF
\begin{equation}
    b_1(\bm{x}) = x_{0F}-x_E-\tau_D v_E.
    \label{eqn:CBFb1}
\end{equation}
Here, $\tau_D v_E$ is a safety distance which depends on the velocity of the ego-vehicle $v_E$ and a time constant $\tau_D > 0$.
\paragraph*{Safe distance keeping to vertically neighboring vehicles:} 
To this end, we introduce a strictly increasing, continuously differentiable function $\lambda: \mathbb{R} \rightarrow [0,1]$ as well as a continuously differentiable function $\theta: \mathcal{X} \times \mathcal{X} \times \mathcal{X} \rightarrow \mathbb{R}_{\geq0}$. The function $\theta$ is the input to $\lambda$ and defined as
\begin{equation}
    \theta(x_1,x_2,v_2) := \frac{x_1-x_2}{\tau_D v_2},
    \label{eqn:theta}
\end{equation}
where $x_1$ and $x_2$ are $x$-coordinates of two distinct vehicles with $x_1>x_2$, and $v_2$ is the velocity of the second vehicle.
$\theta$~can be viewed as a percentage of safety distance $\tau_D v_{2}$.
Function $\lambda$ has the following properties:
\begin{subequations}
    \begin{align}
    \theta(x_1,x_2,v_2) = 0 &\Rightarrow \lambda(\theta(x_1,x_2,v_2)) = 0 \label{eqn:lambda1}\\
    \theta(x_1,x_2,v_2) = 0.9 &\Rightarrow \lambda(\theta(x_1,x_2,v_2)) = 0.5 \label{eqn:lambda2}\\
    \theta(x_1,x_2,v_2) \geq 1 &\Rightarrow 1 \leq \lambda(\theta(x_1,x_2,v_2)) \leq 1.01. \label{eqn:lambda3}
\end{align}
\end{subequations}

$\lambda$ can be viewed as a percentage of the lane width. 
A feasible choice of $\lambda$, fulfilling the assumptions above, is 
\begin{equation}
    \begin{aligned}
    \lambda(\theta(x_1,x_2,v_2)) := \hspace{5.7cm} \\
    \begin{cases}
    \frac{0.5}{0.9} \cdot \theta(x_1,x_2,v_2) & \text{if} \, \theta(x_1,x_2,v_2) \leq 0.9\\
    a_1 (\theta(x_1,x_2,v_2)+a_2)^3+a_3 & \text{if} \, 0.9 < \theta(x_1,x_2,v_2) \leq 1\\
    \frac{1}{1+e^{-\beta_1 (\theta(x_1,x_2,v_2)+\beta_2)}}+\beta_3 & \text{if} \, \theta(x_1,x_2,v_2) > 1,
    \end{cases}
    \end{aligned}
    \label{eqn:lambda}
\end{equation}
with design parameters $a_1, a_3, \beta_1, \beta_3 \in \mathbb{R}_{>0}$ and $a_2, \beta_2 \in \mathbb{R}_{<0}$, chosen such that $\lambda$ is differentiable.
Respectively for each vehicle -1B, -1F, +1B, +1F, we get candidate HOCBFs
\begin{subequations}\begin{align} 
        b_{2}(\bm{x}) & := y_E -y_{\text{min}}^\ell+ w \cdot \lambda(\theta(x_E,x_{-1B},v_{-1B})), \label{eqn:b2}\\
        b_{3}(\bm{x}) & := y_E -y_{\text{min}}^\ell+ w \cdot \lambda(\theta(x_{-1F},x_E,v_E)), \label{eqn:b3}\\
        b_{4}(\bm{x}) & := w \cdot \lambda(\theta(x_E,x_{+1B},v_{+1B}))+y_{\text{max}}^\ell-y_E, \label{eqn:b4}\\
        b_{5}(\bm{x}) & := w \cdot \lambda(\theta(x_{+1F},x_E,v_E))+y_{\text{max}}^\ell-y_E,\label{eqn:b5}
    \end{align}\label{eqn:b2b3}\noindent 
\end{subequations}
where $\ell$ denotes the lane and $w$ the lane width; the lower and upper bound of a lane $\ell$ are $y_{\text{min}}^\ell:=w \cdot \ell-\frac{w}{2}+\epsilon$ and $y_{\text{max}}^\ell:=w \cdot \ell+\frac{w}{2}-\epsilon$, with $\epsilon \geq 0$. The parameter $\epsilon$ is introduced in order to prevent collisions exactly at the middle line between two lanes. We show in Lemma~\ref{thm:HOCBF} below that \eqref{eqn:b2b3} indeed are valid HOCBFs.

\paragraph*{Safe distance keeping to vehicles $\pm$1F:} 
Similar to function $\lambda$ before, we introduce a strictly decreasing, continuous and differentiable function $\sigma : \mathbb{R} \rightarrow [0,1]$, as well as a continuous and differentiable function $\rho: \mathcal{X} \times \mathcal{X} \rightarrow \mathbb{R}_{\geq0}$. The function $\rho$ is the input to $\sigma$ and defined as
\begin{equation} \label{eqn:rho}
    \rho(y_1,y_2) := \frac{y_1-y_2}{w},
\end{equation}
where $y_1$ and $y_2$ are the $y$-coordinates of two distinct vehicles with $y_1 > y_2$. The function $\rho$ can be viewed as a percentage of lane width $w$.
Function $\sigma$ has the following properties:
\begin{subequations}\label{eqn:sigmaProp}
    \begin{align}
    \rho(y_1,y_2) \geq 0.9 & \Rightarrow \sigma(\rho(y_1,y_2)) \leq 0 \label{eqn:sigma1}\\
    \rho(y_1,y_2) \geq 0.5 & \Rightarrow \sigma(\rho(y_1,y_2)) \geq 0.9 \label{eqn:sigma2}\\
    \rho(y_1,y_2) \leq 0.3 & \Rightarrow 1 \leq \sigma(\rho(y_1,y_2)) \leq 1.01. \label{eqn:sigma3}
\end{align}
\end{subequations}
$\sigma$ can be viewed as a percentage of safety distance $\tau_D v_E$.
A feasible choice for $\sigma$ is a sigmoid function of the form
\begin{equation}
    \sigma(\rho(y_1,y_2)) = \frac{s_1}{1+e^{s_2(\rho(y_1,y_2)-s_3)}}-s_4,
    \label{eqn:sigma}
\end{equation}
with design parameters $s_i \in \mathbb{R}_{>0}$, $i = \{1,...,4\}$.
Then, the candidate CBFs for keeping a safe distance to vehicles $\pm$1F are
\begin{subequations}\label{eqn:b4b5}
\begin{align}
    b_6(\bm{x}) & = x_{-1F}-x_E-\tau_D v_E \cdot \sigma(\rho(y_E,y_{-1F})) \label{eqn:b6}\\
    b_7(\bm{x}) & = x_{+1F}-x_E-\tau_D v_E \cdot \sigma(\rho(y_{+1F},y_E)), \label{eqn:b7}
\end{align} 
\end{subequations}
\begin{lemma}
    The functions $b_{2}$, $b_{3}$, $b_{4}$ and $b_{5}$ are HOCBFs with relative degree $r=2$. $b_6$ and $b_7$ are CBFs. If $-\frac{\pi}{2} < \Psi_E < \frac{\pi}{2}$, then $b_1$ is a CBF.
    \label{thm:HOCBF}
\end{lemma}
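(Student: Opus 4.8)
The plan is to treat each candidate separately and, in every case, to (i) confirm the required differentiability, (ii) compute Lie derivatives along the unicycle dynamics~\eqref{eqn:VehDyn} to read off the relative degree, and (iii) verify the defining inequality~\eqref{eqn:HOCBF} for the HOCBF candidates and its relative-degree-one specialisation for the CBF candidates. The simplification I would exploit is that the admissible inputs are not sign-constrained in the relevant channel, so once the decisive Lie derivative $L_gL_f^{\,r-1}b$ is nonzero the supremum in~\eqref{eqn:HOCBF} is unbounded above and a valid extended class-$\mathcal K$ function $\gamma_r$ exists automatically. The whole argument therefore collapses to certifying the claimed relative degrees.

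For the relative-degree-one claims ($b_1$, $b_6$, $b_7$) I would differentiate along~\eqref{eqn:VehDyn} and isolate the coefficient of the input. In $b_1$ the term $-x_E$ contributes $-\dot x_E=-v_E\cos\Psi_E$, so $L_gb_1$ carries the factor $-\cos\Psi_E$ (the $-\tau_Dv_E$ term adds only a direct input dependence and does not change this); this is nonzero precisely when $\cos\Psi_E\neq0$, i.e. on $\Psi_E\in(-\tfrac{\pi}{2},\tfrac{\pi}{2})$, which explains the hypothesis and yields $r=1$ there. For $b_6$ and $b_7$ the same $-v_E\cos\Psi_E$ contribution appears, now accompanied by the derivative of the coordination factor $\sigma(\rho)$; I would show that the additional $\sigma'(\rho)\dot\rho$ terms do not cancel the $\cos\Psi_E$ coefficient on the operating range, so $L_gb_6,L_gb_7\neq0$ and both are CBFs.

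For the HOCBF claims ($b_2$–$b_5$) the controlled variable is the lateral position $y_E$, and the crux is that $\dot y_E=v_E\sin\Psi_E$ loses its input dependence at $\Psi_E=0$: the part of $L_gb_i$ coming from $y_E$ is proportional to $\sin\Psi_E$ and hence vanishes in the lane-following configuration, ruling out $r=1$ through that channel. Differentiating once more brings in the steering input, since $\ddot y_E$ contains the term $v_E\cos\Psi_E\,\omega_E$, whose coefficient $v_E\cos\Psi_E$ is nonzero for $v_E\neq0$ and $\cos\Psi_E\neq0$; this gives $L_gL_fb_i\neq0$ and $r=2$. With $\psi_0=b_i$ and $\psi_1=\dot\psi_0+\gamma_1(\psi_0)$ as in~\eqref{eqn:psiHOBF}, I would then invoke the unbounded-supremum observation above to satisfy~\eqref{eqn:HOCBF} and conclude each $b_i$ is a valid HOCBF.

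The step I expect to be the main obstacle is pinning down the relative degree of $b_2$–$b_5$ rigorously. The coordination term $w\,\lambda(\theta)$ depends on $x_E$, whose dynamics contain $v_E$ at first order, so a literal computation produces a nonzero contribution $\partial_{x_E}b_i\cdot\dot x_E=w\lambda'(\theta)\,v_E\cos\Psi_E/(\tau_Dv_{-1B})$ in $\dot b_i$ — the velocity channel already appears at first order, which naively collapses the relative degree to one. Reconciling this with the claimed $r=2$ (by restricting attention to the lane-following regime $\Psi_E\approx0$, identifying the steering channel as the one that renders the lateral constraint higher-order, and checking that the $\lambda'$-weighted longitudinal coupling does not spoil the second-order condition) is the delicate point. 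A secondary prerequisite is confirming that the parameters $a_1,a_2,a_3$ and $\beta_1,\beta_2,\beta_3$ in~\eqref{eqn:lambda}, and $s_1,\dots,s_4$ in~\eqref{eqn:sigma}, are chosen so that $\lambda$ and $\sigma$ are genuinely $C^1$ at the piecewise junctions, which is what licenses the differentiations above.
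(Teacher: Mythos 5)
Your overall strategy --- compute Lie derivatives along \eqref{eqn:VehDyn}, read off the relative degrees, and observe that an unconstrained input channel makes the supremum in \eqref{eqn:HOCBF} automatic --- is exactly the route the paper takes, and your treatment of $b_1$, $b_6$, $b_7$ matches its (equally brief) argument. The problem is that your proposal stops at precisely the step that carries the whole lemma: the claim $r=2$ for $b_2$--$b_5$ is flagged as ``the delicate point'' but never resolved, and your own computation shows that on a literal reading it is false. For $b_2$ one has
\begin{equation*}
L_g b_2(\bm{x})\,\bm{u} \;=\; \Bigl(\sin\Psi_E+\tfrac{w\,\lambda'(\theta)}{\tau_D v_{-1B}}\cos\Psi_E\Bigr)\,v_E,
\end{equation*}
and since the choice \eqref{eqn:lambda} is strictly increasing, $\lambda'(\theta)>0$ everywhere; at $\Psi_E=0$ the coefficient equals $w\lambda'(\theta)/(\tau_D v_{-1B})\neq 0$. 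So the repair you sketch --- ``restricting attention to the lane-following regime $\Psi_E\approx 0$'' --- cannot work: the longitudinal coupling is largest exactly at $\Psi_E=0$, and no continuity or smallness argument makes it vanish. The paper's proof sidesteps this by flatly asserting ``$L_g b_2=0$ when $\Psi_E=0$,'' which is true only if the dependence of $\lambda(\theta(x_E,x_{-1B},v_{-1B}))$ on $x_E$ is neglected --- equivalently, only if the relative degree of the lateral constraints is computed with $\omega_E$ as the sole input and $v_E$ absorbed into the drift $f$. That implicit modeling choice is what actually licenses $r=2$ (then indeed $L_g b_i=0$ identically and $L_gL_f b_i$ contains the term $v_E\cos\Psi_E$, nonzero for $v_E\neq 0$ and $|\Psi_E|<\tfrac{\pi}{2}$, as you compute); a complete proof must state it explicitly, and your plan as written cannot conclude without it.

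Two smaller points. First, $b_1=x_{0F}-x_E-\tau_D v_E$ depends directly on the input $v_E$, so it is not a function of the state alone and Definition~\ref{def:reldeg} does not literally apply; your remark that this term ``adds only a direct input dependence'' glosses the same convention the paper leaves implicit (e.g., $v_E$ treated as a differentiable signal or as a state in an extended model with acceleration input). Second, your prerequisite that the parameters in \eqref{eqn:lambda} and \eqref{eqn:sigma} be chosen for $C^1$ junctions is correct and is indeed what the paper assumes when it calls these functions differentiable; that part of your plan is sound.
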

\begin{proof}
For $b_{2}$, $L_g b(t,\bm{x}) = 0$ when $\Psi_E = 0$, hence $r\neq 1$ based on Def. \ref{def:reldeg}. Then, $L_g L_f b(t,\bm{x}) \neq 0$ $\forall \bm{x}$, hence $r=2$ and based on Def. \ref{def:HOCBF}, $b_{2}$ is a HOCBF. Analogously for $b_{3}$, $b_{4}$ and $b_{5}$. For $b_6$ and $b_7$, $L_g b(t,\bm{x}) \neq 0$, $\forall \bm{x} \in \mathcal{X}$. For $b_1$ only if $-\frac{\pi}{2} < \Psi_E < \frac{\pi}{2}$. Hence, they are CBFs based on Def. \ref{def:HOCBF}.
\end{proof}
The assumption $-\frac{\pi}{2} < \Psi_E < \frac{\pi}{2}$ is reasonable for vehicles on a highway, since vehicles are not allowed to turn there. 
The (HO)CBFs are summarized in Table \ref{tab:barrierfct}.

\subsection{Controller}
As in \eqref{eqn:mult}, we summarize HOCBFs and HOCLFs as stack vectors
\begin{subequations}
\begin{align}
    \psi(\bm{x}) & := \begin{bmatrix}
    \dot{b}_1(\bm{x})+\gamma(b_1(\bm{x}))\\
    \psi_{2,r}(\bm{x})\\
    \psi_{3,r}(\bm{x})\\
    \psi_{4,r}(\bm{x})\\
    \psi_{5,r}(\bm{x})\\
    \dot{b}_6(\bm{x})+\gamma(b_6(\bm{x}))\\
    \dot{b}_7(\bm{x})+\gamma(b_7(\bm{x}))\\
    \end{bmatrix},
    \label{eqn:psiQP}\\
    \eta(\bm{x},\delta_{\omega}) & := 
    \eta_{1}(\bm{x}) + \delta_{\omega} \label{eqn:etaQP}
\end{align}
\end{subequations}
where functions $\psi_{k,r}$ for HOCBFs $b_k$, $k\in\{2,3,4,5\}$, are defined by \eqref{eqn:psiHOBF}, $r$ denotes the relative degree and it is $r=2$ (cf. Lemma~\ref{thm:HOCBF}). Analogously, function $\eta_1$ for HOCLF $V$ is defined by \eqref{eqn:etaHOLF}.
The control inputs $[v_E,\omega_E]^T$ are determined by the QP
\begin{subequations}
\label{eqn:QPboth}
\begin{align}
\min_{v_E,\:\delta_{v},\:\omega_E,\:\delta_{\omega}} \quad & 
H_v v_E^2 + H_{\omega} \omega_E+p_v \cdot \delta_{v}^2+p_{\omega} \cdot \delta_{\omega}^2\\
\textrm{s.t.} \quad 
  & \psi(\bm{x}) \geq 0 \label{eqn:QP2c}\\
  & \eta(\bm{x},\delta_{\omega}) \geq 0 \label{eqn:QP2d}\\
  & (v_E-v_{\text{ref}})+\delta_v  =0 \label{eqn:QP2b} \\
  & v_{E,\text{min}}\leq v_E \leq v_{E,\text{max}} \label{eqn:QP2e} \\
  & \omega_{E,\text{min}}\leq\omega_E \leq \omega_{E,\text{max}} \label{eqn:QP2f}
\end{align}
\end{subequations}
where $H_v$, $H_{\omega}$, $p_v$, $p_{\omega} \in \mathbb{R}_{\geq 0}$.
By constraint \eqref{eqn:QP2b}, the reference velocity $v_{\text{ref}}$ is tracked as closely as safety constraints \eqref{eqn:QP2c} admit. Constraints \eqref{eqn:QP2e}-\eqref{eqn:QP2f} are input constraints.

\subsection{Theoretical guarantees}
Consider QP~\eqref{eqn:QPboth}. As an immediate consequence of Corollary \ref{cor:QP}, any locally Lipschitz continuous controller $\bm{u}(\bm{x}) \in \mathcal{U}_{\text{HOCLF}}(\bm{x})$ to vehicle dynamics~\eqref{eqn:VehDyn} guarantees asymptotic stability at $y_E = y_{\text{ref}}$ if $y_{\text{ref}} \in \bigcap_{k=1}^{5}\mathcal{C}_i$ and $[\delta_v,\delta_{\omega}]=0$.
Furthermore, we can show the following.
\vspace{0.2cm}
\begin{theorem}\label{thm:FwdInvGua}
Let $b_i$, $i = 1,...,7$, be (HO)CBFs defined as in Table \ref{tab:barrierfct}. Then, $\mathcal{C} = \bigcap_{k=1}^7 \mathcal{C}_k$ is non-empty. Moreover, any locally Lipschitz continuous control $\bm{u}(\bm{x}) \in \mathcal{U}_{\text{HOCBF}}(\bm{x})$ to the system \eqref{eqn:VehDyn} renders $\mathcal{C}$ forward invariant and asymptotically stable.
\end{theorem}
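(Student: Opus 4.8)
The plan is to treat the statement as two claims — non-emptiness of $\mathcal{C}$ and the forward-invariance/asymptotic-stability property — and to reduce the second to Theorem~\ref{thm:CifwdInv} and Corollary~\ref{cor:QP}, the crux being a compatibility verification for the seven (HO)CBFs. By Lemma~\ref{thm:HOCBF} each $b_i$ is already a valid (HO)CBF with the stated relative degrees, so the only genuinely new work is (i) exhibiting a single safe configuration and (ii) showing the input sets intersect on all of $\mathcal{C}$.

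For non-emptiness I would exhibit an explicit witness and check every barrier there. Take the ego-vehicle travelling straight ($\Psi_E=0$) and centered on its lane $\ell$, i.e. $y_E=w\ell$, with every neighbor either absent — in which case the construction of Sec.~\ref{sec:CtrlProb} places a mock vehicle at sensor distance $r_S$ — or sufficiently far ahead in the $x$-direction, and with all vehicles moving at a common constant velocity so that $\dot b_k=0$ and hence $\psi_{k,1}=\gamma(b_k)$ for the relative-degree-two barriers. At the lane center $y_{\min}^\ell \le y_E \le y_{\max}^\ell$, and since each coordination term $w\lambda(\cdot)\ge 0$, one obtains $b_2,\dots,b_5\ge 0$ and thus $\psi_{k,1}=\gamma(b_k)\ge 0$. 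For $b_1$ a far (or mock) preceding vehicle gives $x_{0F}-x_E\ge \tau_D v_E$, and for $b_6,b_7$ the full lateral separation to $\pm 1\mathrm{F}$ yields $\rho\ge 0.9$, hence $\sigma\le 0$ by \eqref{eqn:sigma1}, so the distance requirement $\tau_D v_E\,\sigma(\cdot)$ is nonpositive and is met trivially. This establishes $\mathcal{C}\neq\emptyset$.

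For invariance and stability, Theorem~\ref{thm:CifwdInv} applies to each barrier individually, and the multi-barrier conclusion follows from Corollary~\ref{cor:QP} once the family is compatible, i.e. $\bigcap_{k=1}^7 \mathcal{U}_{k,\mathrm{HOCBF}}(\bm{x})\neq\emptyset$ for every $\bm{x}\in\mathcal{C}$. My strategy for compatibility is to exploit a near-decoupling of the two control channels: the distance-keeping barriers $b_1,b_6,b_7$ are rendered satisfiable by reducing the longitudinal velocity $v_E$, since braking increases each such $b_k$ and decreases $\dot{x}_E$, while the lateral barriers — a pair of lower bounds $b_2,b_3$ and upper bounds $b_4,b_5$ on $y_E$ — are handled through $\omega_E$, their admissible $y$-window always containing the interval $[y_{\min}^\ell,y_{\max}^\ell]$ of width $w-2\epsilon>0$ because every coordination term $w\lambda(\cdot)\ge 0$ only enlarges it. I would then show that a single ``brake-and-recenter'' input lies in all seven input sets simultaneously and respects the bounds \eqref{eqn:QP2e}--\eqref{eqn:QP2f}, which gives compatibility and, via Corollary~\ref{cor:QP}, forward invariance and asymptotic stability of $\mathcal{C}$.

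The main obstacle is precisely this compatibility step, because $v_E$ is not confined to the longitudinal channel: it enters $\theta$ inside $b_3,b_5$ and multiplies $\sin\Psi_E$ in $\dot y_E$, so a naive ``brake for longitudinal, steer for lateral'' choice need not satisfy the relative-degree-two constraints of $b_2,\dots,b_5$. Making the argument rigorous requires tracking how the chosen $v_E$ propagates into $\psi_{k,1}$ for the lateral HOCBFs and confirming that a feasible $\omega_E$ remains within \eqref{eqn:QP2f}; this is exactly where the engineered monotonicity and range properties \eqref{eqn:lambda1}--\eqref{eqn:lambda3} of $\lambda$ and \eqref{eqn:sigma1}--\eqref{eqn:sigma3} of $\sigma$ — which deactivate a constraint before it can conflict with another — must do the decisive work.
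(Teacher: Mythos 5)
Your non-emptiness argument is essentially the paper's: the paper likewise uses $w\lambda(\cdot)\geq 0$ to show the lane-center strip $\lbrace \bm{x} \:|\: y_E \in [y_{\text{min}}^\ell, y_{\text{max}}^\ell]\rbrace$ lies in $\bigcap_{k=2}^5 \mathcal{C}_k$, and handles $b_1$, $b_6$, $b_7$ simply by noting that $x_E$ is not bounded below, so it never needs a ``far neighbors'' witness — its version is marginally stronger in that the strip is safe for arbitrary neighbor states. Your additional check that $\psi_{k,1}=\gamma(b_k)\geq 0$ at the witness is actually more careful than the paper: the paper defines $\mathcal{C}_k$ as the zero-superlevel set of $b_k$, whereas Theorem~\ref{thm:CifwdInv} with $r=2$ renders $\lbrace b_k \geq 0\rbrace \cap \lbrace \psi_{k,1}\geq 0\rbrace$ invariant, a distinction the paper glosses over.

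Where you genuinely diverge is the second half, and there you overshoot the statement and then stall on exactly the overshoot. You set out to \emph{prove} compatibility, i.e.\ $\bigcap_{k=1}^{7}\mathcal{U}_{k,\text{HOCBF}}(\bm{x})\neq \emptyset$ for all $\bm{x}\in\mathcal{C}$, via a ``brake-and-recenter'' input, and you candidly admit the argument is unfinished because $v_E$ couples into $b_3$, $b_5$ through $\theta$ and into $\dot{y}_E$ through $v_E\sin(\Psi_E)$. But the theorem does not ask for this: its hypothesis already quantifies over controls $\bm{u}(\bm{x})\in\mathcal{U}_{\text{HOCBF}}(\bm{x})$, and input-set compatibility is a standing assumption the paper introduces before Corollary~\ref{cor:QP} (following Tan, 2022) rather than a claim it proves. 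The paper's entire second step is: $b_k$ are (HO)CBFs by Lemma~\ref{thm:HOCBF}, hence Corollary~\ref{cor:QP} gives forward invariance and asymptotic stability of $\mathcal{C}$. So to close your proof, drop the compatibility construction and invoke the hypothesis directly: any control taking values in the intersection of input sets renders each $\mathcal{C}_k$ forward invariant by Theorem~\ref{thm:CifwdInv} applied per barrier, and the conclusion for $\mathcal{C}$ follows from Corollary~\ref{cor:QP}. Had you completed the brake-and-recenter argument you would have established something strictly stronger than the theorem — unconditional feasibility of the CBF constraints in QP~\eqref{eqn:QPboth} on all of $\mathcal{C}$, which neither the statement nor the paper's proof delivers — and the obstruction you identify is real: near states where several $\psi_{k,1}$ vanish simultaneously under the input bounds \eqref{eqn:QP2e}--\eqref{eqn:QP2f}, the claimed decoupling of the $v_E$ and $\omega_E$ channels does not hold without a quantitative argument that the $\lambda$ and $\sigma$ properties deactivate conflicting constraints in time.
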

\begin{proof}
At first, we show that $\mathcal{C} \neq \emptyset$. We start by considering all constraints on the $x$-coordinate. From $b_1\geq 0 $, we obtain that $x_{0F}-\tau_D v_E\geq x_E$ and it follows $\lbrace \bm{x} \:|\: x_E \in (-\infty,x_{0F}-\tau_D v_E]\rbrace \subseteq \mathcal{C}_1$. Similarly, we obtain  from $b_6\geq0$ that $x_{-1F}-\tau_D v_E \cdot \sigma \geq x_E$ and it follows $\lbrace \bm{x} \: | \: x_E \in (-\infty,x_{-1F}-\tau_D v_E \cdot \sigma]\rbrace \subseteq \mathcal{C}_6$. Analogously, we obtain from $b_7\geq 0$ that $\lbrace \bm{x} \: | \: x_E \in (-\infty,x_{+1F}-\tau_D v_E \cdot \sigma]\rbrace \subseteq \mathcal{C}_7$. Consequently, as $x_E$ is not lower bounded, $\bigcap_{k=1,6,7}\mathcal{C}_k \neq \emptyset$.

Next, we consider the constraints on the $y$-coordinate. From $b_2$, we obtain that
\begin{align} \label{eqn:Prb2}
    & b_2(\bm{x})  = y_E- y_{\text{min}}^\ell +\underbrace{w\cdot \lambda}_{\geq 0} \geq y_E - y_{\text{min}}^\ell 
\end{align}
and thus 
\begin{align} 
    \label{eqn:PrC2}
    \begin{split}
        \lbrace \bm{x} \: | \: y_E \in [y_{\text{min}}^\ell, y_{\text{max}}^\ell]\rbrace &= \{\bm{x}\:|\:y_E-y_{\text{min}}^\ell \geq 0\} \\
    &\subseteq \{\bm{x}\:|\:b_2(\bm{x})\geq0\} = \mathcal{C}_2.
    \end{split}
\end{align} 
By proceeding analogously for $b_3$, $b_4$, $b_5$, we obtain $\lbrace \bm{x} \: | \: y_E \in [y_{\text{min}}^\ell, y_{\text{max}}^\ell]\rbrace \subseteq \bigcap_{k=2}^5\mathcal{C}_{k} $. Altogether, we have $\mathcal{C} = \bigcap_{k=1}^7 \mathcal{C}_k \neq \emptyset$.

At last, as functions $b_k$, $k = 1,...,7$, are (HO)CBFs due to Lemma \ref{thm:HOCBF}, $\mathcal{C}$ is forward invariant and asymptotically stable according to Corollary \ref{cor:QP}. 
\end{proof}

\subsection{Coordination principle}

\begin{figure}
    \centering
    \hspace{-0.35cm}
    \subfloat[]{
         \includegraphics[width=4.35cm]{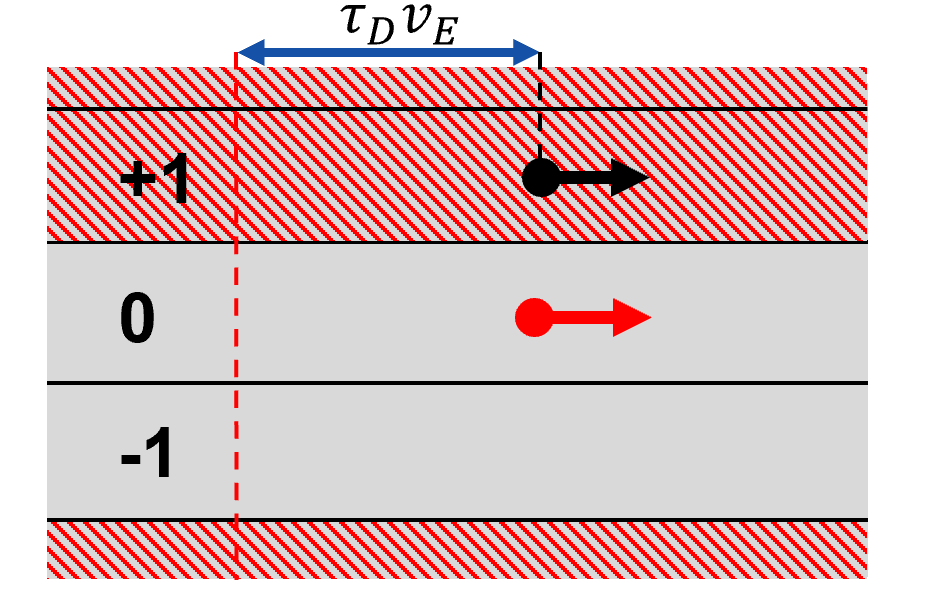}
         \label{fig:B1}
    }
	\hfill
    \subfloat[]{
         \includegraphics[width=3.75cm]{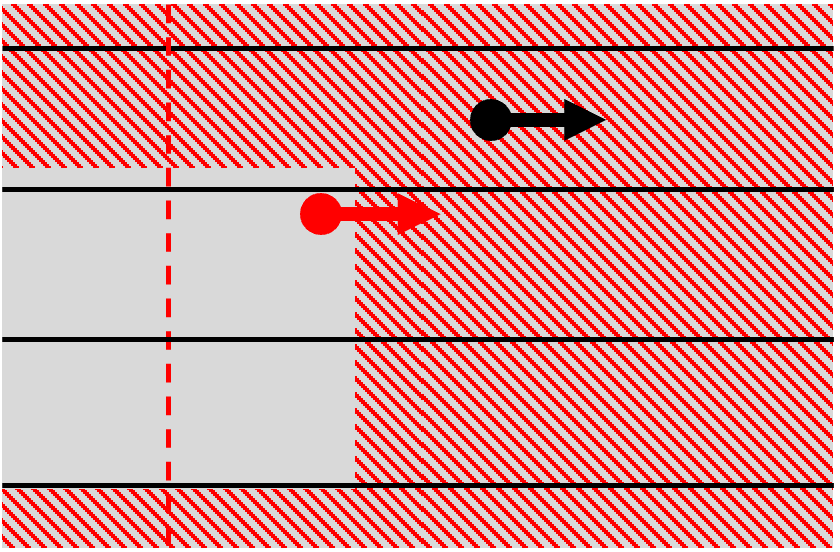}
         \label{fig:B2}        
    }
	\\
    \subfloat[]{
         \includegraphics[width=3.75cm]{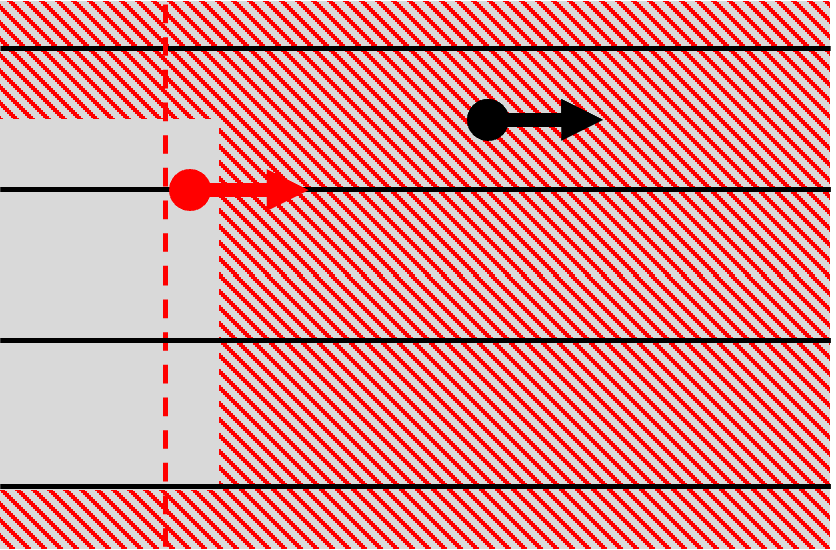}
         \label{fig:B3}
    }
	\hfill
    \subfloat[]{
         \includegraphics[width=3.75cm]{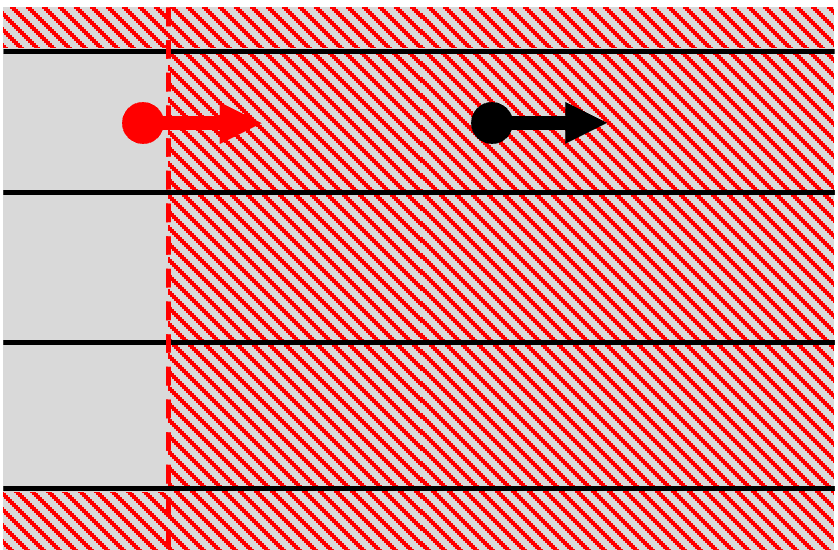}
         \label{fig:B4}        
    }
    \caption{Safe region of the ego-vehicle (red) during lane switching.}
    \label{fig:BDescrip}
\end{figure}
In order to illustrate the concept of vehicle coordination, a lane change manoeuvre is shown in Fig.~\ref{fig:BDescrip}. The ego-vehicle (red) changes from lane $\ell = 2$ (index 0) to lane~3 (index +1). Another vehicle (black) denoted by +1F is ahead of the ego vehicle on the neighboring lane with $y_{+1F} = (\ell+1)w$. The red area denotes the unsafe region for the ego-vehicle, i.e., where $b_k<0$ for some $k=\{1,...,7\}$.

We define the safe set to $b_k$ as $\mathcal{C}_k := \{\bm{x}=[x_E, y_E,\Psi_E,\\ \bm{x}_{i\in \mathcal{N}}]^T\:|\:b_k(\bm{x})\geq 0\}$ and $\mathcal{C} := \bigcap_{k=1}^7 \mathcal{C}_k$. 

At first observe that the ego-vehicle can move freely on its lane independently of the position of neighbouring vehicles (Fig.~\ref{fig:B1}), i.e., $\lbrace \bm{x} \: | \: y_E \in [y_{\text{min}}^\ell, y_{\text{max}}^\ell]\rbrace \subseteq \mathcal{C}$. This has already been shown in the proof of Thm.~\ref{thm:FwdInvGua} in  \eqref{eqn:Prb2}-\eqref{eqn:PrC2}. 
Thus, we conclude that the ego-vehicle can always manoeuvre to $y_E = y_{\text{max}}^\ell$ independently of the states of the neighboring vehicles. 

If the ego-vehicle manoeuvres to $y_E = y_{\text{max}}^\ell$, then $y_{+1F}-y_E = y_{+1F}-y_{\text{max}}^\ell = (\ell+1)w - (\ell+\frac{1}{2})w= \frac{w}{2}$, and it follows from~\eqref{eqn:sigma2} that $\sigma(\rho(y_{+1F},y_E)) \geq 0.9$. Furthermore, we obtain from~$b_7 \geq 0$ also that $x_{+1F}-x_E \geq 0.9 \,\tau_D v_E$, which leads to $\lambda(\theta(x_{+1F},x_E,v_E)) = 0.5$ according to~\eqref{eqn:lambda2}. Consequently, we have 
\begin{align*}
    & b_5(\bm{x})  = w \cdot \underbrace{\lambda}_{= 0.5}+y_{\text{max}}^\ell-y_E
\end{align*}
and thus
\begin{align*}
    \{\bm{x}|&y_E \in (-\infty,\frac{w}{2} +y_{\text{max}}^\ell], \;  x_{+1F}-x_E \geq 0.9 \,\tau_D v_E\} \\ & = \{\bm{x}\:|\: 0.5\,w +y_{\text{max}}^\ell\geq y_E,\:  x_{+1F}-0.9\,\tau_D v_E \geq x_E\} \\ &\subseteq \{\bm{x}\:|\:b_5(\bm{x})\geq0, \: b_7(\bm{x})\geq 0\} = \mathcal{C}_5 \cap \mathcal{C}_7.
\end{align*}
Hence, we can conclude that once the ego-vehicle reaches $y_E = y_{\text{max}}^\ell$, the center of lane~3, which is $(\ell+1)w = 3w = \frac{w}{2} +y_{\text{max}}^\ell $, is also contained in the safe set $\mathcal{C}$. In Fig.~\ref{fig:B2}-\ref{fig:B4}, the vehicle eventually switches to lane 3.

\enlargethispage{\baselineskip}

In summary, the functions $\lambda$ and $\sigma$ couple and coordinate the safe regions in $x$- and $y$-direction and are therefore called \emph{coordination functions}. Since vehicles are not coordinated by a high level traffic coordinator this control approach is decentralized.

\section{SIMULATION}
\label{sec:Sim}
The proposed controller is validated in a numerical simulation. We consider a highway with two lanes of width~$w$ with 2-3 identical autonomous vehicles. The vehicles behave according to their kinematic model as given in \eqref{eqn:VehDyn}. All vehicles calculate their input via the same CLF-CBF-QP~\eqref{eqn:QPboth} with parameters as given in Table~\ref{tab:ConstVal}.

A video illustrates the simulation results\footnote{https://www.youtube.com/watch?v=OxPSGhFoq2o}. The video shows three scenarios. In each of the scenarios, a further task is added: Whereas in the first scenario, a vehicle only needs to follow another vehicle in a safe distance, in scenario two a lane change is added. In the third scenario, the neighboring vehicles need to additionally open a gap before the lane change is completed. Due to space limitations, we only show the simulation results of
the second scenario in Fig.~\ref{fig:Scen2}, where a vehicle switches the lane in front of a neighbouring vehicle.
The simulation is implemented in Matlab and runs on an Intel Core i7 1.3 GHz with 32 GB RAM. The optimization is solved using the function \textit{fmincon}. The average computation time for the control input is 0.096 sec.

\begin{table}[t]
    \centering
\caption{Simulation parameters.
}
\label{tab:ConstVal}
    \begin{tabular}{|c|c|c|c|c|c|}
    	\hline
    	$a_1$ &  $a_2$ & $a_3$ & $\beta_1$ & $\beta_2$ & $\beta_3$\\ \hline
    	234.14 & -0.872 & 0.4949 & 1209.2 & -0.9962 & 0.01\\ \hline
    	\multicolumn{6}{c}{}\\ \hline
    	$s_1$ & $s_2$ & $s_3$ & $s_4$ & $\tau_D$ & $r_S$\\ \hline
    	1.03 & 16 & 0.64 & 0.02 & 0.9 $[s]$ & 100 $[m]$\\\hline
    	\multicolumn{6}{c}{}\\ \cline{2-5}
    	\multicolumn{1}{c|}{} & $H_v$ & $H_\omega$ & $p_v$ & $p_\omega$ & \multicolumn{1}{|c}{}\\ \cline{2-5}
    	\multicolumn{1}{c|}{} & 1 & 70,000 & 1e9 & 1e9 & \multicolumn{1}{|c}{} \\\cline{2-5}
    \end{tabular}
\end{table}

\begin{figure}[tb]
    \centering
    \includegraphics[width=0.9\columnwidth]{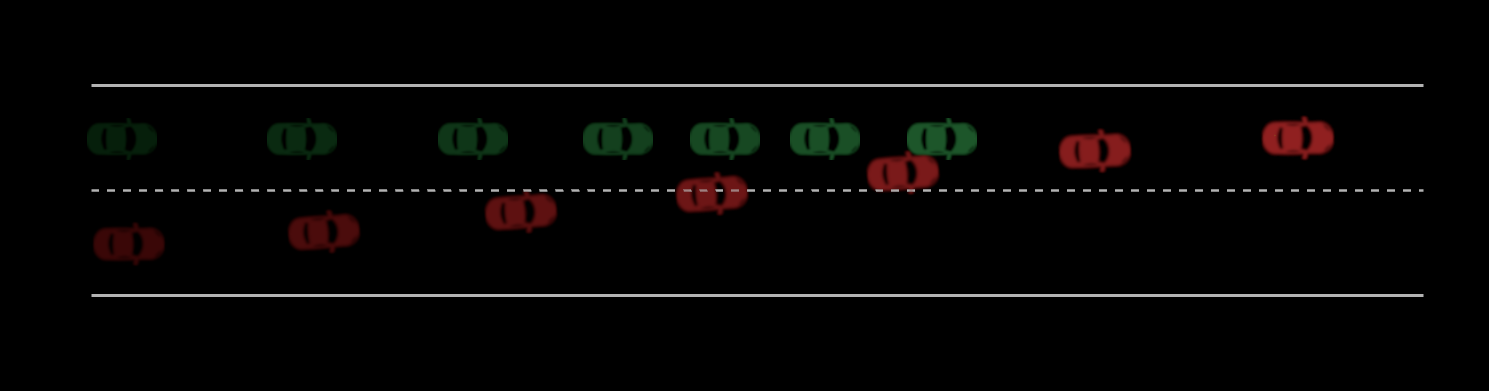}
    \caption{Scenario 2: adaptive cruise control with lane switching. The vehicle shadows represent time instances $t = 0,...,6$.}
    \label{fig:Scen2}
\end{figure}

\section{CONCLUSION}
\label{sec:Con}
In this work, a decentralized controller based on a CLF-CBF-QP is presented which does not require to switch between several controllers.
The controller enables autonomous driving on a lane with adaptive cruise control, and allows for lane switching without collisions. 
The novelty of the approach is that the vehicles indicate their objective by manoeuvres. To this end, we introduced coordination functions for coordinating the vehicles' safe regions.

\balance







\bibliographystyle{ifacconf}

\end{document}